\documentclass[11.5pt]{article}
\usepackage[margin=2cm]{geometry}
\usepackage{amsmath,amssymb,amsthm,
	    natbib,
	    graphicx,
	    tabularx,
	    booktabs,
	    color,
	    xspace,    
	  }
\usepackage{array}
\newcolumntype{L}[1]{>{\raggedright\let\newline\\\arraybackslash\hspace{0pt}}m{#1}}
\newcolumntype{C}[1]{>{\centering\let\newline\\\arraybackslash\hspace{0pt}}m{#1}}
\newcolumntype{R}[1]{>{\raggedleft\let\newline\\\arraybackslash\hspace{0pt}}m{#1}}
\usepackage{caption}
\usepackage{authblk}
\usepackage{setspace}
\usepackage{textcomp}
\usepackage[mathscr]{eucal}

\usepackage[colorlinks=true,citecolor=blue]{hyperref}
\usepackage{indentfirst}
\usepackage{mathtools,enumerate}

\usepackage{natbib}
\usepackage{tabularx}
\newtheorem{theorem}{Theorem}[section]
\newtheorem{corollary}{Corollary}

\newtheorem{proposition}{Proposition}

\theoremstyle{definition}

\theoremstyle{definition}

\usepackage{etex}
\reserveinserts{18}
\usepackage{morefloats}

\def\R{{\mathbb R}}  
 


\def\d{{\textnormal{d}}}

\oddsidemargin0cm
\topmargin-1.4cm
\textheight23.5cm
\textwidth16cm

\begin{document}
\newcommand{\upperRomannumeral}[1]{\uppercase\expandafter{\romannumeral#1}}
\title{On  the $L_p$-quantiles for the Student $t$ distribution}
\author{M. BERNARDI}
\affil{Department of Statistical Sciences, University of Padova. Email:{mbernardi@stat.unipd.it}\\}

\author{V. BIGNOZZI}
\affil{Department of Methods and Models for Economics, Territory and Finance, Sapienza University of Rome. Email:{valeria.bignozzi@uniroma1.it}\\}
\author{L. PETRELLA}
\affil{Department of Methods and Models for Economics, Territory and Finance, Sapienza University of Rome. Email:{lea.petrella@uniroma1.it}}
\maketitle

\begin{abstract}
$L_p$-quantiles   represent an important class of generalised quantiles and are defined as the minimisers of an expected asymmetric power function, see~\cite{Chen96}. For $p=1$ and $p=2$ they correspond respectively to the quantiles and the expectiles. In his paper~\cite{Koenker93} showed that the $\tau$ quantile and the $\tau$ expectile coincide for every $\tau\in(0,1)$ for a class of rescaled Student $t$ distributions with two degrees of freedom. Here, we extend this result proving that for the  Student $t$ distribution with $p$ degrees of freedom,  the $\tau$ quantile and the $\tau$ $L_p$-quantile coincide for every $\tau\in(0,1)$ and the same holds for any affine transformation. Furthermore, we investigate the properties of  $L_p$-quantiles  and provide recursive equations for the truncated moments of the Student $t$ distribution.  
\end{abstract}

\textbf{keywords}: Expectiles; truncated moments; quantiles; risk measures; double factorial coefficient.

\section{Introduction}
 During the last years the statistical, econometric and financial literature, has seen  a growing interest in generalising the quantile tool to gather information about a random variable. Starting from the seminal paper by~\cite{KoenkerBassett78}, quantile regression has emerged as a valid alternative to the mean regression in order to deeply investigate the relation between a response variable and some covariates, especially when the tails behaviour is of concern. Both the classical and Bayesian researches have developed new tools through the years, to solve the related inferential problems. We refer the interested reader to~\cite{Koenker05} and the references therein for the classical point of view and to~\cite{YuM01},~\cite{TaddyK10},~\cite{KottasK09}, ~\cite{Bernardietal15} and~\cite{Bernardietal16} for the Bayesian one. In the same spirit, \cite{NeweyP87} and~\cite{Efron91} proposed an alternative approach, namely the expectile regression, to summarise the conditional distribution of a dependent variable given the regressors. Expectile regression, based on the asymmetric least squares estimation, is a valid alternative to quantile regression and a natural extension of the mean regression. Although quantiles have an immediate interpretation, expectile regression methods have increased during the years (see for example~\citealp{Schnabel09},~\citealp{Derossi09},~\citealp{Sobotka12},~\citealp{Guoetal13} and~\citealp{Sobotka13}). One of the reasons resides in the generalisation of the mean regression and the easiness of calculation since these methods involve a minimisation problem which is continuously differentiable. Moreover, as shown in~\cite{Jones94}, expectiles are linked to quantiles, indeed they can be interpreted as quantiles of a related distribution.
 
More general tools, which include expectiles and quantiles, have been introduced by~\cite{BrecklingChambers88} and by~\cite{Chen96} who proposed the use of $M$-quantiles and $L_p$-quantiles respectively. The $M$-quantiles extend the idea of quantiles by minimising a generic asymmetric loss function, while the  $L_p$-quantiles minimise an asymmetric power  function. 
The property of minimising a given loss function, shared by all the generalised quantiles, finds its roots in decision theory (see for instance \citealp{Savage71}) but was named elicitability by~\cite{Lambertetal08}. Recently, elicitable statistical functionals have been investigated by~\cite{Gneiting11} and~\cite{Heinrich13}. Generalised quantile regression models have also  been considered in the literature, see, for example, \cite{ChambersT06} and~\cite{Pratesi09} for the $M$-quantile framework, and~\cite{BernardiBP16} for a Bayesian approach to $L_p$-quantiles regression. 

Quantiles and their generalisations play a central role also in the financial and actuarial science literature as key tools for computing capital requirements. The quantile of a loss distribution, named Value-at-Risk (VaR), was introduced by J.P. Morgan in 1994 to measure the riskiness of a financial position. Since then it has become the most widespread risk measure for regulatory purposes, see, for example, \cite{Jorion07}. Despite its popularity the VaR, as a measure of risk, presents some important drawbacks. Firstly, it does not provide any information about losses exceeding the considered quantile; further it is not a sub-additive risk measure, meaning  that it may not incentivise portfolio diversification. For these reasons several alternatives have been considered, one of which is the Expected Shortfall (ES) which is defined as the average of the quantiles exceeding the VaR. The debate on whether VaR or ES should be used for risk management purposes is still on going and was recently questioned in two consultative documents by the~\cite{BCBS13} and the International Association of Insurance Supervisors~\cite{IAIS14}. Recently the debate on relevant properties that a risk measure should satisfy, like coherency and elicitability, (see~\citealp{Artzner99} and~\citealp{Gneiting11}, respectively) brought the attention to other risk measures based on generalised quantiles. \cite{Ziegel14} and~\cite{BelliniB15}, for example, showed that the expectile is the unique risk measure that is both coherent and elicitable, while~\cite{Taylor08} provided a methodology to estimate the ES and the VaR using expectiles. This explains the increased popularity of expectiles in the very recent quantitative risk management literature, see for instance~\cite{BelliniDB15},~\cite{JakobsonsV15} and the references therein. Since $L_p$-quantiles are natural extensions of expectiles and quantiles they received particular consideration in risk management as possible competitors of VaR; in particular, they are elicitable and have other interesting properties in terms of risk measures (see for instance~\citealp{Bellinietal14}). $L_p$-quantiles are also well known in actuarial science where they belong to the class of  zero utility premiums (see for example~\citealp{DeprezG85}), while in financial mathematics they are considered as shortfall risk measures, see for instance~\cite{FollmerS2002}. We refer the interested reader to~\cite{Embrechts14} and~\cite{Emmer16} for more detailed  summaries on the recent developments on risk measurement. 

In the present contribution we focus on the class of $L_p$-quantiles and express them in terms of moments and truncated moments. Thanks to this characterisation and to their definition as minimisers of a power loss function we are able to prove that for a Student $t$  distribution with $p$ degrees of freedom the $\tau$ quantile and the $\tau$ $L_p$-quantile coincide for every $\tau\in(0,1)$ and the same holds for any affine transformation
We expect that this property of the Student $t$ distribution may be useful for the explicit computation of its quantiles which are not available in closed form. Moreover, we investigate the properties of $L_p$-quantiles and provide recursive equations for the truncated moments of the Student $t$ distribution.

 Our results can be seen as an extension of the well known result proposed in~\cite{Koenker92} and~\cite{Koenker93} who showed that the  $\tau$ quantile and the $\tau$ expectile coincide for every $\tau\in(0,1)$ for a particular class of distributions which corresponds to rescaled Student $t$ distributions with two degrees of freedom. Some extensions of these results have been already considered recently by~\cite{Zou14} who provided a class of distributions whose $\omega(\tau)$ expectile coincide with the $\tau$ quantile for any monotone function $\omega(\cdot)$.
 
The rest of the paper is organised as follows. In Section~\ref{sec:Lpquantiles} we discuss some properties of the $L_p$-quantiles and provide their representation in terms of moments and truncated moments. Section~\ref{sec:student} gathers general results on the Student $t$ distribution and presents a recursive formula for the calculation of its truncated moments. Finally, Section~\ref{sec:Lpstudent} contains the main results of the paper on the equality of quantiles and $L_p$-quantiles for a Student $t$ distribution. The proof is rather long and it involves some ideas from combinatorial analysis therefore some technical details are postponed to the Appendix.

\section{$L_p$-quantiles}
\label{sec:Lpquantiles} 
The $L_p$-quantile of a random variable $Y$ at level $\tau\in(0,1)$, denoted by $\rho_{p,\tau}(Y)$, is defined as: 
\begin{equation}\label{eq:Lpquantilescore} 
\rho_{p,\tau}(Y)=\arg\min_{m\in\R}E[\tau((Y-m)_+)^p+(1-\tau)((Y-m)_-)^p], \qquad \textrm{for }p=1,2,\ldots
\end{equation}
where $x_+=\max(x,0)$ and $x_-=\max(-x,0)$, i.e., it is the minimiser of the asymmetric power function $\tau((y-m)_+)^p+(1-\tau)((y-m)_-)^p$,
provided that the expectation exists. It is easy to see that when $p=1$, 
$\rho_{1,\tau}(Y)$ corresponds to the quantile  $q_{\tau}(Y)$ of $Y$ and Equation~\eqref{eq:Lpquantilescore} has a unique solution only if the distribution of $Y$ is strictly increasing in a neighbourhood of $\tau$. When $p=2$, $\rho_{2,\tau}(Y)$ corresponds to the $\tau$ level expectile already introduced by~\cite{NeweyP87}.  Moreover for $p\geq 2$,  using the first order condition,  $L_p$-quantiles can alternatively be written as the  unique solution to the following equation
\begin{equation}\label{eq:Lpquantile}
 \tau E\left[\left((Y-m)_+\right)^{p-1}\right]=(1-\tau)E\left[\left((Y-m)_-\right)^{p-1}\right],\quad\text{for all }\tau\in(0,1).
\end{equation}
From now on we only consider the case $p\geq 2$ unless differently specified.\newline

In the following proposition we summarise some properties of the $L_p$-quantiles that may be useful in the regression framework~\citep[see for instance][]{BernardiBP16} or for risk measurement purposes~\citep[see for example][]{Bellinietal14}. 

\begin{proposition}Let $X,Y$ be random variables with finite moment of order $p-1$. Then, for any  $\tau\in(0,1)$ the following properties are satisfied:
\begin{enumerate}
\item \textit{Monotonicity:} If $X\leq Y$ a.s., then $\rho_{p,\tau}(X)\leq \rho_{p,\tau}(Y)$;
\item \textit{Translation Invariance:} For any $a\in\R$, $\rho_{p,\tau}(X+a)=\rho_{p,\tau}(X)+a$;
\item \textit{Positive homogeneity:} For any $\lambda\geq0$, $\rho_{p,\tau}(\lambda X)=\lambda\rho_{p,\tau}(X)$;
\item \textit{Symmetry:} $\rho_{p,\tau}(X)=-\rho_{p,1-\tau}(-X)$; 
\item \textit{Law-invariance:} If $X$ and $Y$ have the same probability distribution then $\rho_{p,\tau}(X)=\rho_{p,\tau}(Y)$;
\item \textit{Concavity:} For any $\beta\in[0,1]$, $\rho_{p,\tau}(\beta X+(1-\beta)Y)\geq \beta\rho_{p,\tau}(X)+(1-\beta)\rho_{p,\tau}(Y)$ if and only if $p=2$ and $\tau\leq {1}/{2}$.
\end{enumerate}
\end{proposition}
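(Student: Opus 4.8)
The plan is to verify the six properties in turn, in the order listed, exploiting the variational characterisation \eqref{eq:Lpquantilescore} together with the first–order condition \eqref{eq:Lpquantile}. Properties 1--5 are elementary consequences of the structure of the score function $S_{p,\tau}(y,m)=\tau((y-m)_+)^p+(1-\tau)((y-m)_-)^p$, and I would dispatch them quickly; the genuine content lies in property 6.

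\medskip
\textbf{Properties 1--5.} For \emph{translation invariance}, substitute $m\mapsto m+a$ in the argmin and use $(X+a-m-a)_\pm=(X-m)_\pm$; the objective is unchanged, so the minimiser shifts by $a$. For \emph{positive homogeneity} with $\lambda>0$, pull $\lambda^p$ out of the expectation after writing $(\lambda X-m)_\pm=\lambda(X-m/\lambda)_\pm$, so the minimiser scales by $\lambda$ (the case $\lambda=0$ is trivial). For \emph{symmetry}, note $(-X-m)_+=(X-(-m))_-$ and $(-X-m)_-=(X-(-m))_+$, so the score for $-X$ at level $1-\tau$ and point $-m$ equals the score for $X$ at level $\tau$ and point $m$; minimising over $m$ gives $\rho_{p,\tau}(X)=-\rho_{p,1-\tau}(-X)$. \emph{Law-invariance} is immediate since the objective depends on the law of the random variable only. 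For \emph{monotonicity}, I would work from \eqref{eq:Lpquantile}: fix $m$ and observe that $y\mapsto \tau((y-m)_+)^{p-1}-(1-\tau)((y-m)_-)^{p-1}$ is nondecreasing in $y$; hence if $X\le Y$ a.s.\ then $\tau E[((X-m)_+)^{p-1}]-(1-\tau)E[((X-m)_-)^{p-1}]\le \tau E[((Y-m)_+)^{p-1}]-(1-\tau)E[((Y-m)_-)^{p-1}]$ for every $m$, and since the left-hand side defines a strictly decreasing function of $m$ whose zero is $\rho_{p,\tau}(X)$ (respectively $\rho_{p,\tau}(Y)$), the ordering of the zeros follows.

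\medskip
\textbf{Property 6 (the main obstacle).} This is the delicate part and the only place where the value $p=2$ and the restriction $\tau\le 1/2$ enter. I would split it into the ``if'' and ``only if'' directions. For the ``if'' direction, assume $p=2$ and $\tau\le1/2$; here $\rho_{2,\tau}$ is the expectile, and concavity of $\beta\mapsto \rho_{2,\tau}(\beta X+(1-\beta)Y)$ is a known fact that I would reprove via the implicit formula for the expectile, $\rho_{2,\tau}(Z)=E[Z]+\frac{2\tau-1}{1-\tau}E[(Z-\rho_{2,\tau}(Z))_+]$ (rearranged from \eqref{eq:Lpquantile} with $p=2$), using superadditivity of $Z\mapsto E[(Z-c)_+]$ type arguments together with the sign condition $2\tau-1\le 0$; concretely one shows $\rho_{2,\tau}$ is a concave functional by checking that for the optimal $m^\star$ of the mixture the first-order expression is dominated by the convex combination of the individual first-order expressions. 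For the ``only if'' direction I would argue by contradiction: positive homogeneity (property 3) forces $\rho_{p,\tau}$, if concave, to be superlinear, hence subadditive; but $L_p$-quantiles are elicitable, and for $p\ge 3$ or for $\tau>1/2$ one exhibits a two-point (or small perturbation) counterexample to subadditivity — I would take $X,Y$ Bernoulli-type or normal and compute $\rho_{p,\tau}$ explicitly, showing $\rho_{p,\tau}(X+Y)>\rho_{p,\tau}(X)+\rho_{p,\tau}(Y)$, and for $p=2,\tau>1/2$ invoke the classical fact that the expectile is convex (not concave) in that regime. The main work, and the step I expect to be most technical, is producing a clean closed-form or monotonicity computation that simultaneously rules out all $p\ge3$ (for every $\tau$) and $p=2$ with $\tau>1/2$; I would look for a single parametric family — e.g.\ scaled Bernoulli random variables where \eqref{eq:Lpquantile} becomes a polynomial equation in $m$ — on which the second derivative in $\beta$ of $\rho_{p,\tau}(\beta X+(1-\beta)Y)$ can be signed explicitly, its sign being governed precisely by $\mathrm{sign}(p-2)$ and $\mathrm{sign}(2\tau-1)$.
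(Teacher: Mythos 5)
Your treatment of properties 1--5 is fine (the paper simply asserts they follow from the definition), and your sketch of the ``if'' half of property 6 can be completed: with $m^\ast=\beta\rho_{2,\tau}(X)+(1-\beta)\rho_{2,\tau}(Y)$ one writes the first-order expression as $\tau E[Z-m]+(2\tau-1)E[(Z-m)_-]$ and uses convexity of $t\mapsto t_-$ together with $2\tau-1\le 0$ and monotonicity in $m$. The genuine gap is the ``only if'' direction, which is the actual content of the statement and which your proposal does not establish. First, the reduction you propose has the inequality backwards: a concave, positively homogeneous functional is \emph{superadditive} ($\rho(X+Y)=2\rho(\tfrac{X+Y}{2})\ge\rho(X)+\rho(Y)$), not subadditive, so exhibiting $\rho_{p,\tau}(X+Y)>\rho_{p,\tau}(X)+\rho_{p,\tau}(Y)$ would not contradict concavity at all; you would need a strict violation of superadditivity (or directly of midpoint concavity). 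Second, even with the sign corrected, the decisive computation --- a family of random variables on which concavity demonstrably fails for \emph{every} $p\ge 3$ and every $\tau\in(0,1)$, and for $p=2$ with $\tau>1/2$ --- is exactly the step you defer (``the step I expect to be most technical''), so it is missing, not merely compressed. For $p=2$, $\tau>1/2$, invoking convexity of the expectile is also not by itself enough: a convex functional could still be concave if it were affine, so you must additionally exhibit non-additivity (easy, but it has to be said).

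For comparison, the paper does not construct counterexamples at all: it cites Proposition 6(b) of Bellini, Klar, M\"uller and Rosazza Gianin (2014), which says the generalized quantile defined by $E[\varphi(Y-m)]=0$ with $\varphi(t)=\tau(t_+)^{p-1}-(1-\tau)(t_-)^{p-1}$ is concave \emph{if and only if} $\varphi$ is concave; the whole of property 6 then reduces to the elementary observation that for $p=2$ the kink slopes give concavity exactly when $\tau\le 1/2$, while for $p\ge 3$ the second derivative of $\varphi$ is negative on $t<0$ and positive on $t>0$, so $\varphi$ is never concave. If you want to keep your self-contained route, you must either reprove such an ``only if'' characterization or actually produce and verify the counterexample family with the inequality in the correct direction; as written, the proposal proves only one direction of property 6.
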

\begin{proof}Properties $1$-$5$ follow easily from the definition of $L_p$-quantiles, hence we only discuss the concavity one. 
From Equation~\eqref{eq:Lpquantile}, $L_p$-quantiles can be written as the unique solution to the equation $E[\varphi(Y-m)]=0$, where the function $\varphi:\R\to\R$ is defined as
$$
 \varphi(t)=\tau\left(t_+\right)^{p-1}-(1-\tau)\left(t_-\right)^{p-1},\quad\text{for  }\tau\in(0,1).
$$
From Proposition 6(b) in~\cite{Bellinietal14}, $L_p$-quantiles are concave if and only if $\varphi$
is concave.  It is easy to see that expectiles, i.e. the $L_p$-quantiles when $p=2$, are concave for $\tau\leq 1/2$. For $p\geq3$, by computing the second derivative of $\varphi$, we note that $\varphi$ is concave for $t<0$ and convex otherwise and never concave on the entire domain for any $\tau$.
\end{proof}
In the next proposition, we provide a further characterisation of the $L_p$-quantiles of a random variable $Y$ with cumulative distribution function $F_Y$ in terms of the raw moments $E[Y^k]$  and  the truncated $k$-th moments $G_{k,Y}$
$$
G_{k,Y}(x)=\int_{-\infty}^x y^k\d F_{Y}(y), \qquad x\in\R,
$$
for $k=0,\ldots,p-1$. {This representation will be exploited throughout the paper to get the main results.}
\begin{proposition}Let $p$  be an odd number, then the $L_p$-quantile can be written as the unique solution  of the following equation {with respect to $m$}:
\begin{equation}\label{eq:polynodd}
\sum_{k=0}^{p-1}\binom{p-1}{k}(-m)^k \left(\tau E[Y^{p-1-k}]-G_{p-1-k,Y}(m)\right)=0.
\end{equation}
Similarly, for $p$ even, the $L_p$-quantile can be written as:
\begin{equation}\label{eq:polyneven}
\sum_{k=0}^{p-1}\binom{p-1}{k}(- m)^{k}\left(\tau E[Y^{p-1-k}]+(1-2\tau)G_{p-1-k,Y}(m)\right)=0.
\end{equation}
\end{proposition}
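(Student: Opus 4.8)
The plan is to obtain both identities directly from the first-order condition~\eqref{eq:Lpquantile} by expanding the power functions with the binomial theorem. Since on $\{Y\ge m\}$ we have $\left((Y-m)_+\right)^{p-1}=(Y-m)^{p-1}$ and $(y-m)^{p-1}=\sum_{k=0}^{p-1}\binom{p-1}{k}(-m)^k\,y^{\,p-1-k}$, and since $\int_m^\infty y^{\,p-1-k}\,\d F_Y(y)=E[Y^{p-1-k}]-G_{p-1-k,Y}(m)$, I would first write
$$E\!\left[\left((Y-m)_+\right)^{p-1}\right]=\sum_{k=0}^{p-1}\binom{p-1}{k}(-m)^k\Bigl(E[Y^{p-1-k}]-G_{p-1-k,Y}(m)\Bigr).$$
Analogously, on $\{Y\le m\}$ one has $\left((Y-m)_-\right)^{p-1}=(m-Y)^{p-1}=(-1)^{p-1}(Y-m)^{p-1}$, hence
$$E\!\left[\left((Y-m)_-\right)^{p-1}\right]=(-1)^{p-1}\sum_{k=0}^{p-1}\binom{p-1}{k}(-m)^k\,G_{p-1-k,Y}(m).$$
The hypothesis that $Y$ has a finite moment of order $p-1$ guarantees that all the raw and truncated moments appearing here are finite, so these manipulations are justified.

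Next I would substitute the two expansions into~\eqref{eq:Lpquantile}, move everything to one side, and factor out the common $\binom{p-1}{k}(-m)^k$; the parity of $p$ enters only through the sign $(-1)^{p-1}$. If $p$ is odd then $(-1)^{p-1}=1$, and the resulting coefficient of $\binom{p-1}{k}(-m)^k$ is $\tau E[Y^{p-1-k}]-\tau G_{p-1-k,Y}(m)-(1-\tau)G_{p-1-k,Y}(m)=\tau E[Y^{p-1-k}]-G_{p-1-k,Y}(m)$, which gives~\eqref{eq:polynodd}. If $p$ is even then $(-1)^{p-1}=-1$, and the coefficient becomes $\tau E[Y^{p-1-k}]-\tau G_{p-1-k,Y}(m)+(1-\tau)G_{p-1-k,Y}(m)=\tau E[Y^{p-1-k}]+(1-2\tau)G_{p-1-k,Y}(m)$, which gives~\eqref{eq:polyneven}.

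Finally, uniqueness of the solution of~\eqref{eq:polynodd}--\eqref{eq:polyneven} in $m$ follows from the uniqueness of the solution of~\eqref{eq:Lpquantile} for $p\ge2$, already recorded above, because the passage between the two is a chain of equivalences. I do not expect any real obstacle here: the computation is elementary bookkeeping, and the only point requiring care is keeping track of the sign $(-1)^{p-1}$ coming from rewriting $(m-Y)^{p-1}$ on the event $\{Y\le m\}$, which is exactly what separates the statement into its odd and even versions.
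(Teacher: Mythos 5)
Your proposal is correct and follows essentially the same route as the paper: both start from the first-order condition~\eqref{eq:Lpquantile} and apply the binomial expansion, with the parity of $p$ entering only through the sign $(-1)^{p-1}$ on the lower-tail term. The paper merely organises the algebra slightly differently (for $p$ odd via the identity~\eqref{eq:foctau}, for $p$ even by adding and subtracting $\int_{-\infty}^m(y-m)^{p-1}\,\d F_Y(y)$), but the resulting coefficient bookkeeping is identical to yours.
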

\begin{proof}
From Equation~\eqref{eq:Lpquantile} and for any $\tau\in(0,1)$:
\begin{equation}\label{eq:focLp}
\tau\int_m^\infty (y-m)^{p-1}\d F_Y(y)=(1-\tau)\int_{-\infty}^m (-1)^{p-1}(y-m)^{p-1}\d F_Y(y).
\end{equation}
Let $p$ be an odd number. Considering that
\begin{equation}\label{eq:foctau}
\tau\int_{-\infty}^\infty (y-m)^{p-1}\d F_Y(y)=\int_{-\infty}^m (y-m)^{p-1}\d F_Y(y)
\end{equation} 
and using the binomial expansion, we get
$$
\tau\left(\int_{-\infty}^\infty\sum_{k=0}^{p-1}\binom{p-1}{k}y^{p-1-k}(-m)^{k}\d F_Y(y)\right)=\left(\int_{-\infty}^m\sum_{k=0}^{p-1}\binom{p-1}{k}y^{p-1-k}(- m)^{k}\d F_Y(y)\right).$$ The linearity of the integral gives 
\begin{equation}\label{eq:polynm}
\sum_{k=0}^{p-1}\binom{p-1}{k}(- m)^{k}\left(\tau E[Y^{p-1-k}]-G_{p-1-k,Y}(m)\right)=0.
\end{equation}
Now let $p$ be an even number; rearranging the terms in Equation~\eqref{eq:focLp} and adding and subtracting $\int_{-\infty}^m (y-m)^{p-1}\d F_Y(y)$, leads to
$$
\tau\left(\int_{-\infty}^\infty(y-m)^{p-1}\d F_Y(y)-2\int_{-\infty}^m (y-m)^{p-1}\d F_Y(y)\right)= \left(\int_{-\infty}^m-(y-m)^{p-1}\d F_Y(y)\right).$$   Using again  the binomial expansion, the equation becomes
$$
\sum_{k=0}^{p-1}\binom{p-1}{k}(- m)^{k}\left(\tau E[Y^{p-1-k}]+(1-2\tau)G_{p-1-k,Y}(m)\right)=0.
$$
\end{proof}

\section{Some useful properties of the Student $t$ distribution}\label{sec:student}
{In this section we gather some results on the Student $t$ distribution that will be used to prove the main theorems of the paper. Let $Y$ be} a Student $t$ random variable with $p$ degrees of freedom having density function
\begin{equation*}
f_Y(y)=C_p\left(1+\frac{y^2}{p}\right)^{-\frac{p+1}{2}},\quad\textrm{where}\quad y\in\R,~C_p=\frac{\Gamma\left(\frac{p+1}{2}\right)}{\sqrt{p\pi}\Gamma\left(\frac{p}{2}\right)}.
\end{equation*} 
When $p$ is even $C_p$ can be written as $C_p={(p-1)!!}p^{-{1}/{2}}/(2(p-2)!!)$ where the symbol $!!$ denotes the double factorial defined as
$$
x!!=\left\{ 
\begin{array}{cc}
x\cdot(x-2)\cdot\ldots \cdot1\qquad&\textrm{for } x \textrm{ positive odd integer},\\
x\cdot(x-2)\cdot\ldots \cdot2\qquad&\textrm{for } x \textrm{ positive even integer},
\end{array}%
\right. 
$$
with the convention that $0!!=0.$
The  raw moments of order $j$, $1\leq j\leq p-1$ are defined as 
$$
 E[Y^j]=\left\{ 
\begin{array}{cc}
0\qquad&\textrm{if } j \textrm{ is odd},\\
\frac{\Gamma\left(\frac{j+1}{2}\right)\Gamma\left(\frac{p-j}{2}\right)}{\sqrt{\pi}\Gamma\left(\frac{p}{2}\right)}p^{\frac{j}{2}}={\prod_{k=1}^{{j}/{2}}\frac{2k-1}{p-2k}p}\qquad&\textrm{if } j \textrm{ is even}.%
\end{array}%
\right. 
$$
In the following proposition we provide an easy way to compute  the truncated moments of order $j$, $G_{j,Y}$. We recall that the symbol $\lfloor\cdot\rfloor$ denotes the lower integer part of a number.
\begin{proposition}Let $Y$ be a Student $t$ random variable with  $p$ degrees of freedom then the truncated moments of order $j$, for $0\leq j\leq p-1$, are given by:
\begin{align*}
G_{j,Y}(m)=-C_p\left(1+\frac{m^2}{p}\right)^{\frac{1-p}{2}}\sum_{i=0}^{\lfloor\frac{j-1}{2}\rfloor}m^{j-1-2i}p^{i+1}\frac{(j-1)!!}{(j-1-2i)!!}{\prod_{k=1}^{i+1}\frac{1}{p-j-2+2k}}+F_Y(m) E[Y^j],
\end{align*}
for $0<j\leq p-1$ and $G_{0,Y}(m)=F_Y(m).$
\end{proposition}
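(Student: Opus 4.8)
The plan is to convert the closed form into a two--step recursion in $j$ obtained from a single integration by parts, and then to unroll that recursion, reading off the double factorials and the product from the way the recursion coefficients accumulate.

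First I would note the elementary identity $\frac{\d}{\d y}\big(1+\frac{y^2}{p}\big)^{\frac{1-p}{2}}=\frac{(1-p)y}{p}\big(1+\frac{y^2}{p}\big)^{-\frac{p+1}{2}}$, so that $yf_Y(y)=\frac{C_pp}{1-p}\frac{\d}{\d y}\big(1+\frac{y^2}{p}\big)^{\frac{1-p}{2}}$. For $2\le j\le p-1$, write $y^jf_Y(y)=y^{j-1}\cdot yf_Y(y)$ and integrate by parts over $(-\infty,m]$: the boundary term at $-\infty$ vanishes because $y^{j-1}(1+y^2/p)^{(1-p)/2}$ decays like $|y|^{j-p}\to0$ (this is where $j\le p-1$ is used). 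In the remaining integral $\int_{-\infty}^m(j-1)y^{j-2}\big(1+\frac{y^2}{p}\big)^{\frac{1-p}{2}}\,\d y$ I would use $\big(1+\frac{y^2}{p}\big)^{\frac{1-p}{2}}=\big(1+\frac{y^2}{p}\big)\big(1+\frac{y^2}{p}\big)^{-\frac{p+1}{2}}$ to rewrite it, up to the factor $C_p$, as $G_{j-2,Y}(m)+\frac1pG_{j,Y}(m)$. Collecting the $G_{j,Y}(m)$ terms and solving yields
\[
G_{j,Y}(m)=\frac{C_pp}{j-p}\,m^{j-1}\Big(1+\frac{m^2}{p}\Big)^{\frac{1-p}{2}}-\frac{(j-1)p}{j-p}\,G_{j-2,Y}(m),\qquad 2\le j\le p-1,
\]
together with the base cases $G_{0,Y}(m)=F_Y(m)$ (immediate from the definition) and $G_{1,Y}(m)=\frac{C_pp}{1-p}(1+m^2/p)^{\frac{1-p}{2}}$ (direct computation). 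Note $j-p\ne0$ and all the factors $p-j+2l$ stay strictly positive, so no denominator vanishes.

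Next I would iterate this identity down to its base case. Unrolling, the coefficient multiplying $m^{j-1-2i}\big(1+\frac{m^2}{p}\big)^{\frac{1-p}{2}}$ is $\frac{C_pp}{(j-2i)-p}\prod_{l=0}^{i-1}\big(-\frac{(j-1-2l)p}{(j-2l)-p}\big)$, and I would check that the numerators multiply to $C_pp^{i+1}(j-1)!!/(j-1-2i)!!$ while the denominators $(j-p)(j-2-p)\cdots((j-2i)-p)$ are, up to the sign $(-1)^{i+1}$, equal to $\prod_{k=1}^{i+1}(p-j-2+2k)$; this reproduces exactly the asserted coefficient $-C_pp^{i+1}\frac{(j-1)!!}{(j-1-2i)!!}\prod_{k=1}^{i+1}\frac{1}{p-j-2+2k}$. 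For $j$ even the chain ends at $G_{0,Y}(m)=F_Y(m)$, and the accumulated product $\prod_{l=0}^{j/2-1}\frac{(j-1-2l)p}{p-j+2l}$ multiplying it has to be shown to equal $E[Y^j]$; this follows by comparing it with the even--moment formula, both being $p^{j/2}(j-1)!!/[(p-2)(p-4)\cdots(p-j)]$. For $j$ odd the chain ends at $G_{1,Y}$, which is precisely the $i=(j-1)/2$ term of the sum, and $E[Y^j]=0$, so no residual term survives; a final check confirms that $\lfloor(j-1)/2\rfloor$ is the correct upper summation index in both parities.

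The integration by parts and the vanishing of the boundary term are routine; the only genuine work is this bookkeeping in the unrolling step---tracking how the two families of coefficients collapse into the double--factorial ratio and the product $\prod_{k=1}^{i+1}(p-j-2+2k)^{-1}$, and confirming that the terminal constant is $E[Y^j]$. As an alternative one can bypass the recursion and differentiate the asserted right--hand side directly: writing $S_j(m)$ for the displayed sum and using $\frac{\d}{\d m}(1+m^2/p)^{\frac{1-p}{2}}=\frac{(1-p)m}{pC_p}f_Y(m)$ together with $C_p(1+\frac{m^2}{p})^{\frac{1-p}{2}}=(1+\frac{m^2}{p})f_Y(m)$, the derivative collapses to $f_Y(m)\big[\frac{(p-1)m}{p}S_j(m)-(1+\frac{m^2}{p})S_j'(m)+E[Y^j]\big]$, and one is left with the polynomial identity $\frac{(p-1)m}{p}S_j(m)-(1+\frac{m^2}{p})S_j'(m)+E[Y^j]=m^j$, to be verified coefficient by coefficient, plus the observation that both $G_{j,Y}(m)$ and the asserted expression vanish as $m\to-\infty$. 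I would present the recursion route, since it also delivers the recursive formula for the truncated moments promised in the abstract.
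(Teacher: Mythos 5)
Your proposal is correct and follows essentially the same route as the paper: establish the two-step recursion $G_{j,Y}(m)=-\frac{m^{j-1}}{p-j}pC_pK_p+\frac{p(j-1)}{p-j}G_{j-2,Y}(m)$ with base cases $G_{0,Y}=F_Y$ and $G_{1,Y}=\frac{p}{1-p}C_pK_p$, then unroll it to the closed form, identifying the terminal factor with $E[Y^j]$ for $j$ even and using $E[Y^j]=0$ for $j$ odd. The only difference is that you derive the recursion self-containedly by integration by parts (with the correct boundary argument requiring $j\leq p-1$), whereas the paper cites formulas 2.147(2) and 2.263(1) of Gradshteyn and Ryzhik; your explicit bookkeeping of the accumulated coefficients is accurate (and implicitly uses the standard convention $0!!=1$).
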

\begin{proof}
It is immediate to see that
\begin{equation*}G_{0,Y}(m)=F_Y(m)\quad \textrm{and}\quad 
G_{1,Y}(m)=\frac{p}{1-p}C_pK_p,
\end{equation*}
where $K_p=\left(1+{m^2}/{p}\right)^{{(1-p)}/{2}}.$
For $j\geq 2$, formulas 2.147(2) and 2.263(1) in~\cite{TableofInt07}, provide (for $p$ odd and even respectively) a characterisation of the truncated moment of order $j$ in terms of the truncated moment of order $j-2$:
\begin{align*}
G_{j,Y}(m)&=-\frac{m^{j-1}}{p-j}pC_pK_p+\frac{p(j-1)}{p-j}G_{j-2,Y}(m).
\end{align*}
By working backwards recursively, we obtain
\begin{equation*}
G_{j,Y}(m)=-C_p\left(1+\frac{m^2}{p}\right)^{\frac{1-p}{2}}\sum_{i=0}^{\lfloor\frac{j-1}{2}\rfloor}m^{j-1-2i}p^{i+1}\frac{(j-1)!!}{(j-1-2i)!!}{\prod_{k=1}^{i+1}\frac{1}{p-j-2+2k}}+F_Y(m) E[Y^j],
\end{equation*}
where we used that $ E[Y^j]=0$ for $j$ odd.
\end{proof}
From this expression it easy to see that 
the truncated moment of order $p-1-k$, evaluated at the $\tau$-quantile, $\tau\in(0,1)$, that is in $m=q_\tau(Y)$, is:
\begin{equation}\label{eq:truncmom}
G_{p-1-k,Y}(m)=-C_pK_p\sum_{i=0}^{\lfloor{\frac{p-k-2}{2}}\rfloor}m^{p-k-2-2i}p^{i+1}\frac{(p-k-2)!!}{(p-k-2-2i)!!}{\prod_{j=1}^{i+1}\frac{1}{k-1+2j}}+\tau E[Y^{p-1-k}]
\end{equation}
for $0\leq k<p-1$, and  $G_{0,Y}(m)=\tau$.


Another interesting property of the Student $t$ distribution, which will be extensively used to prove our results, is related to its cumulative distribution function. In what follows we report an interesting formula  provided by~\cite{Shaw05} which states that  the cumulative distribution function $F_Y$ for a Student $t$ with even degrees of freedom $p$ can be written as:
\begin{equation}
F_Y(m)=mK_p\left(\sum_{i=0}^{\frac{p}{2}-1}{m^{2i}}a_p(i)\right)+\frac{1}{2},
\end{equation} 
where
$$a_p(i)=C_p\prod_{j=0}^i\frac{p-2j}{(2j+1)p}.$$
By denoting $A_{p-1}$  the double factorial binomial coefficient 
\begin{equation*}
A_{p-1}(k)=\frac{(p-1)!!}{k!!(p-1-k)!!}
\end{equation*}
having the property $A_{p-1}(k)=A_{p-1}(p-1-k)$,  the equality $F_Y(m)=\tau$ can be written as
\begin{equation}\label{eq:cdfStudent0}
mK_p\left(\sum_{i=0}^{\frac{p}{2}-1}{m^{2i}}\frac{1}{2}A_{p-1}(2i+1)p^{-\frac{2i+1}{2}}\right)=\tau-\frac{1}{2}.
\end{equation}
 
Equation~\eqref{eq:cdfStudent0} can be rearranged to:
\begin{equation}\label{eq:cdfStudent1}
\frac{2}{K_p}\left(\tau-\frac{1}{2}\right)=\sum_{i=0}^{\frac{p}{2}-1}{m^{2i+1}}p^{-\frac{2i+1}{2}}A_{p-1}(2i+1).
\end{equation}
By squaring it 
and by changing the index of the summation with $k={p}/{2}-1-i$ it  can be  rewritten as
\begin{equation}\label{eq:cdfStudentsquared}
\tau(1-\tau)=\frac{1}{4}-\frac{K_p^2p^{1-p}}{4}\left(\sum_{k=0}^{\frac{p}{2}-1}{m^{p-1-2k}}p^{k}A_{p-1}(2k)\right)^2.
\end{equation}

\section{$L_p$-quantiles for the Student $t$ distribution}\label{sec:Lpstudent}
In his paper~\cite{Koenker93} investigates the relation between quantiles and expectiles. In details he shows that for a particular class of distributions (that are affine transformations of a Student $t$ random variable with two degrees of freedom) the $\tau$ expectiles and the $\tau$ quantiles coincide for every $\tau$. After that~\cite{Remillard95} give sufficient conditions under which a quantile and an expectile coincide. Those results were then extended by~\cite{Zou14} that analytically characterised distributions whose $\omega(\beta)$ expectile and $\beta$ quantile coincide, for any monotone function $\omega(\cdot)$.
In what follows we extend all of those results and prove that  the Student $t$ distribution with  $p$ degrees of freedom has the peculiarity of having the $L_p$-quantile and the quantile coinciding for every $\tau\in(0,1)$. This result is then extended to any affine transformation of the Student $t$ random variable.

We first consider the case where $p$ is an odd number.
\begin{theorem}Let $p$ be an odd number and $Y$ a random variable with Student $t$ distribution with $p$ degrees of freedom. Then the  $L_p$-quantile $\rho_{p,\tau}(Y)$ and the quantile $q_\tau(Y)$ coincide for every $\tau\in(0,1)$.
\end{theorem}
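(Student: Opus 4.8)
The plan is to substitute the closed-form expressions from Section~\ref{sec:student} into the polynomial characterisation of the $L_p$-quantile (Equation~\eqref{eq:polynodd}) and show that, when $m=q_\tau(Y)$, every term cancels identically in $\tau$. First I would set $m=q_\tau(Y)$ and use Equation~\eqref{eq:truncmom}, which gives the truncated moment $G_{p-1-k,Y}(m)$ as a sum of a $K_p$-weighted polynomial in $m$ plus $\tau\,E[Y^{p-1-k}]$. The crucial observation is that in the bracket $\tau E[Y^{p-1-k}]-G_{p-1-k,Y}(m)$ appearing in~\eqref{eq:polynodd}, the two $\tau E[Y^{p-1-k}]$ terms cancel exactly, leaving only $C_pK_p$ times a polynomial in $m$. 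Thus Equation~\eqref{eq:polynodd} at $m=q_\tau(Y)$ reduces to showing that a certain double sum,
\[
\sum_{k=0}^{p-1}\binom{p-1}{k}(-m)^k\sum_{i=0}^{\lfloor\frac{p-k-2}{2}\rfloor}m^{p-k-2-2i}p^{i+1}\frac{(p-k-2)!!}{(p-k-2-2i)!!}\prod_{j=1}^{i+1}\frac{1}{k-1+2j},
\]
vanishes identically as a polynomial in $m$ (with the $k=p-1$ term interpreted via $G_{0,Y}(m)=\tau$, contributing nothing after the cancellation). Note that the constraint that $m$ is actually a quantile is not even needed for the odd case beyond the cancellation of raw-moment terms — the identity should hold for all real $m$.

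The second step is the combinatorial heart of the argument: proving that the displayed polynomial is identically zero. I would collect the coefficient of a fixed power $m^{p-1-2\ell}$ (only odd-index powers survive since $p-1$ is even and the exponents $k+(p-k-2-2i)=p-2-2i$ have fixed parity). Writing the coefficient of $m^{p-1-2\ell}$ as a sum over the pairs $(k,i)$ with $k+p-k-2-2i=p-1-2\ell$, i.e. $i=\ell-\tfrac12$... — more carefully, one matches exponents and rewrites the binomial $\binom{p-1}{k}$, the double factorials $(p-k-2)!!/(p-k-2-2i)!!$, and the product $\prod_{j=1}^{i+1}(k-1+2j)^{-1}$ in terms of Gamma functions or ordinary factorials, reducing the vanishing claim to a hypergeometric-type identity (a terminating ${}_2F_1$ or Vandermonde-like sum). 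This is exactly the kind of computation the introduction warns is "rather long" and "involves some ideas from combinatorial analysis"; I expect the authors push the bookkeeping into the Appendix. The main obstacle is precisely this: managing the double-factorial binomial coefficients $A_{p-1}(\cdot)$ and the telescoping products $\prod_{j=1}^{i+1}\frac{1}{k-1+2j}$ simultaneously, and recognising the resulting alternating sum as a known vanishing identity.

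Finally, once the theorem is established for the standard Student $t$, the extension to affine transformations is immediate from the Proposition in Section~\ref{sec:Lpquantiles}: translation invariance ($\rho_{p,\tau}(Y+a)=\rho_{p,\tau}(Y)+a$) and positive homogeneity ($\rho_{p,\tau}(\lambda Y)=\lambda\rho_{p,\tau}(Y)$ for $\lambda\ge 0$) hold for $L_p$-quantiles, and the same equivariances hold for ordinary quantiles; for $\lambda<0$ one uses the symmetry property together with the symmetry of the Student $t$ density about $0$. Hence if $\rho_{p,\tau}(Y)=q_\tau(Y)$ for all $\tau$, the same equality holds for $\lambda Y+a$ for all $\lambda\neq0$, $a\in\R$. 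So the only real work is Steps~1 and~2, and within those the genuinely hard part is the combinatorial identity verifying that the $K_p$-polynomial vanishes.
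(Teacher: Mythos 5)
Your first step is exactly the paper's reduction: set $m=q_\tau(Y)$, substitute~\eqref{eq:truncmom} into~\eqref{eq:polynodd} so that the terms $\tau E[Y^{p-1-k}]$ cancel (this is the only place the quantile property $F_Y(m)=\tau$ is used), and observe that what remains is $C_pK_p$ times a polynomial in $m$ which must vanish identically, the $k=p-1$ term contributing nothing. Also, since the exponent $k+(p-k-2-2i)=p-2-2i$ depends on $i$ alone, ``collecting the coefficient of a fixed power of $m$'' is nothing more than exchanging the two summations; the brief detour through ``$i=\ell-\tfrac12$'' is a sign of confusion on this point, though you then correct it. The remarks on affine transformations are not part of this theorem and are harmless.

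The genuine gap is in your Step 2, which is the actual content of the proof: you reduce the theorem to the vanishing of the double sum, and then assert that the resulting alternating sum should be ``a known vanishing identity'' of terminating ${}_2F_1$ or Vandermonde type, without identifying or proving it. As written, that assertion \emph{is} the theorem; moreover the conversion of the double factorials to Gamma functions splits according to the parity of $k$, so the reduction to a single standard hypergeometric evaluation is not routine, and no Vandermonde-type sum is exhibited. The paper closes this step with an elementary argument you could adopt: absorbing $\binom{p-1}{k}$, the ratio $(p-k-2)!!/(p-k-2-2i)!!$ and the product $\prod_{j=1}^{i+1}(k-1+2j)^{-1}$ into double-factorial binomial coefficients turns the coefficient of $m^{p-2-2i}p^i$ into $\sum_{k=0}^{p-2(i+1)}(-1)^kA_{p-1}(k)A_{p-1}(k+1+2i)$, as in~\eqref{eq:poddbis}; by the reflection symmetry $A_{p-1}(k)=A_{p-1}(p-1-k)$ and the oddness of $p$, the summand $f_i(k)=(-1)^kA_{p-1}(k)A_{p-1}(k+1+2i)$ satisfies $f_i(k)=-f_i(p-2(i+1)-k)$, so the terms cancel pairwise and every coefficient is zero. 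Until you supply either this pairing argument or an explicit proof of the hypergeometric identity you invoke, the proposal is an outline rather than a proof.
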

\begin{proof}  As already said, when $p=1$ we have $\rho_{1,\tau}(Y)=q_\tau(Y)$, hence we consider the case $p\geq 3$. If the $L_p$-quantile and the quantile coincide then the equation
\begin{equation}\label{eq:poddbiss}
\sum_{k=0}^{p-1}B_{p-1}(k)(-m)^k \left(\tau E[Y^{p-1-k}]-G_{p-1-k,Y}(m)\right)=0
\end{equation}
is  satisfied for $m=q_{\tau}(Y)=\rho_{p,\tau}(Y)$ where $B_{p-1}(k)=\binom{p-1}{k}=\frac{(p-1)!}{k!(p-1-k)!}$.  Hence from now on we assume $m=q_{\tau}(Y).$ 
Using Equation~\eqref{eq:truncmom} to write $G_{p-1-k,Y}(m)$ in~\eqref{eq:poddbiss} we can cancel the term $\tau E[Y^{p-1-k}]$ 
so that the equation to verify becomes
\begin{equation}
\sum_{k=0}^{p-2}B_{p-1}(k)(-m)^k\left(C_pK_p\sum_{i=0}^{\lfloor{\frac{p-k-2}{2}}\rfloor}m^{p-k-2-2i}p^{i+1}\frac{(p-k-2)!!}{(p-k-2-2i)!!}{\prod_{j=1}^{i+1}\frac{1}{k-1+2j}}\right)=0,
\end{equation}
which, using the properties of the binomial coefficients, can be rewritten as
\begin{equation}
K_p\frac{p^{\frac{1}{2}}}{\pi}\sum_{k=0}^{p-2} \sum_{i=0}^{\lfloor{\frac{p-k-2}{2}}\rfloor}(-1)^km^{p-2-2i}p^{i}A_{p-1}(k)A_{p-1}(k+1+2i)=0.
\end{equation}
By dividing both sides for $K_p{p^{{1}/{2}}}/{\pi}$ and inverting the order of the summations, we obtain  
\begin{equation}\label{eq:poddbis}
\sum_{i=0}^{{\frac{p-3}{2}}}m^{p-2-2i}p^i\sum_{k=0}^{p-2(i+1)}(-1)^kA_{p-1}(k)A_{p-1}(k+1+2i)=0.
\end{equation}
Equation~\eqref{eq:poddbis} is a polynomial equation of order $p-2$ containing only odd powers of $m$. The proof is concluded by showing that  the coefficients of these powers are all null. Indeed the function
$$
f_i(k):=(-1)^kA_{p-1}(k)A_{p-1}(k+1+2i),\quad\text{for }i=0,\ldots,\frac{p-3}{2}
$$
satisfies $f_i(k)=-f_i(p-2(i+1)-k)$ for $k=0,\ldots,p-2(i+1)$, hence all the terms in the second summation in Equation~\eqref{eq:poddbis}
cancel pairwise, the first with the last one, the second with the second to last and so on.
\end{proof}
The following theorem proves the analogous result for $p$ even.
\begin{theorem}Let $p$ be an even number and $Y$ a random variable with Student $t$ distribution with $p$ degrees of freedom. Then the  $L_p$-quantile $\rho_{p,\tau}(Y)$ and the quantile $q_\tau(Y)$ coincide for every $\tau\in(0,1)$.
\end{theorem}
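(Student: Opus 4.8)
The plan is to run the argument of the preceding (odd-$p$) theorem: it suffices to check that $m=q_\tau(Y)$ satisfies equation~\eqref{eq:polyneven}, since that equation has $\rho_{p,\tau}(Y)$ as its unique solution, and this then forces $\rho_{p,\tau}(Y)=q_\tau(Y)$. Substituting the closed form~\eqref{eq:truncmom} for $G_{p-1-k,Y}$ at $m=q_\tau(Y)$ into~\eqref{eq:polyneven}, the raw-moment contributions now combine — rather than cancel, as for odd $p$ — via $\tau E[Y^{p-1-k}]+(1-2\tau)\tau E[Y^{p-1-k}]=2\tau(1-\tau)E[Y^{p-1-k}]$, so~\eqref{eq:polyneven} reduces to
\begin{equation*}
2\tau(1-\tau)\,E\!\left[(Y-m)^{p-1}\right]=(1-2\tau)\,C_pK_p\sum_{k=0}^{p-2}\binom{p-1}{k}(-m)^k\sum_{i=0}^{\lfloor\frac{p-k-2}{2}\rfloor}m^{p-k-2-2i}p^{i+1}\frac{(p-k-2)!!}{(p-k-2-2i)!!}\prod_{j=1}^{i+1}\frac{1}{k-1+2j}.
\end{equation*}
Set $M(m):=E[(Y-m)^{p-1}]$. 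Rearranging the right-hand side exactly as in the proof of the previous theorem — the same binomial and double-factorial identities, now with the even-$p$ value of $C_p$ — turns it into $(1-2\tau)\,\tfrac12 p^{1/2}K_p\,Q(m)$, where $Q(m)=\sum_{i=0}^{p/2-1}m^{p-2-2i}p^i S_i$ and $S_i=\sum_{k=0}^{p-2-2i}(-1)^kA_{p-1}(k)A_{p-1}(k+1+2i)$. The key structural difference from the odd case is that the reflection $k\mapsto p-2-2i-k$ now \emph{preserves} the sign of $(-1)^kA_{p-1}(k)A_{p-1}(k+1+2i)$ (because $(-1)^p=+1$), so the inner sums $S_i$ do not vanish and the proof cannot stop here.

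The second step is to eliminate $\tau$. Since $p$ is even, Shaw's identity~\eqref{eq:cdfStudent1} evaluated at $m=q_\tau(Y)$ gives $1-2\tau=-K_pP_2(m)$ with $P_2(m)=\sum_{i=0}^{p/2-1}m^{2i+1}p^{-(2i+1)/2}A_{p-1}(2i+1)$, hence also $2\tau(1-\tau)=\tfrac12(1-K_p^2P_2(m)^2)$. Inserting both into the reduced equation and multiplying through by $2K_p^{-2}=2(1+m^2/p)^{p-1}$ (using $K_p^2(1+m^2/p)^{p-1}=1$) clears $\tau$ and $K_p$ and leaves
\begin{equation*}
\bigl[(1+m^2/p)^{p-1}-P_2(m)^2\bigr]\,M(m)=-\,p^{1/2}\,P_2(m)\,Q(m),
\end{equation*}
which must hold for all real $m$, since $q_\tau(Y)$ sweeps out $\R$ as $\tau$ runs over $(0,1)$. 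Equivalently one may first square the reduced equation and invoke~\eqref{eq:cdfStudentsquared}; this produces the same identity up to a sign, which is then pinned down by matching leading coefficients.

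What is left — and what I expect to be the genuine difficulty — is to prove this polynomial identity. Both sides contain only odd powers of $m$, so it reduces to a finite list of scalar identities tying the alternating double-factorial convolutions $S_i$ to the moments $E[Y^{2j}]$ and to the coefficients of $(1+m^2/p)^{p-1}-P_2(m)^2$. This is the combinatorial heart of the proof: unlike the clean pairwise cancellation available for odd $p$, here one must actually evaluate the $S_i$ — a double-factorial analogue of an alternating Vandermonde sum — and then verify the identity power of $m$ by power of $m$. I would extract these counting identities as separate lemmas (the ``Lemma~C.'' statements) and prove them in the Appendix, for instance by induction on $p$ or by a generating-function / hypergeometric argument; granting them, the theorem follows at once. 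The statement for affine transformations of $Y$ is then immediate from the translation invariance and positive homogeneity in Proposition~1, together with the symmetry $\rho_{p,\tau}(X)=-\rho_{p,1-\tau}(-X)$ and the corresponding elementary properties of quantiles.
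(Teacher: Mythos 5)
Your reduction follows the paper's own route and is correct as far as it goes: substituting \eqref{eq:truncmom} into \eqref{eq:polyneven}, observing that the raw-moment terms combine into $2\tau(1-\tau)E[(Y-m)^{p-1}]$, and then eliminating $\tau$ via Shaw's identity \eqref{eq:cdfStudent1} and its squared form \eqref{eq:cdfStudentsquared} is exactly how the paper proceeds, and you correctly identify why the odd-$p$ pairwise cancellation is unavailable. The problem is that you stop at the point where the actual work begins. The polynomial identity you arrive at is only asserted to be provable ``by induction on $p$ or by a generating-function / hypergeometric argument'', with the needed counting identities left as hypothetical lemmas; nothing is actually established about the alternating convolutions $S_i=\sum_k(-1)^kA_{p-1}(k)A_{p-1}(k+1+2i)$, which, as you note yourself, do not vanish. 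In the paper this is precisely where the substance lies: the square of Shaw's polynomial is expanded explicitly (the Appendix computation leading to \eqref{squaredterm}), its odd-index products cancel the odd-$k$ terms of the $S_i$, and the surviving even-index convolutions are evaluated in closed form via formula (27) of \cite{GouldQ12}, $A_{p-1}(2j)=B_{\frac{p-1}{2}}(j)$, combined with the Chu--Vandermonde identity, after which the binomial expansion of $K_p^{-2}m^{2-p}p^{\frac{p}{2}-1}$ matches the left-hand side of \eqref{eq:withsquaredterm3} term by term. Without an argument at this level of detail your proposal is a correct set-up, not a proof: the ``genuine difficulty'' you flag is the theorem.

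A secondary, repairable, issue: you carry the moment polynomial $M(m)=E[(Y-m)^{p-1}]$ along as an unidentified factor. For even $p$ only the even raw moments survive, and rewriting them with double factorials shows that $M(m)$ is itself proportional to Shaw's polynomial $P_2(m)$; this is what allows the paper to cancel the factor $\tau-\tfrac12$ and land on the lower-degree identity \eqref{eq:withsquaredterm}, whereas your version $\bigl[(1+m^2/p)^{p-1}-P_2(m)^2\bigr]M(m)=-p^{1/2}P_2(m)Q(m)$ is a product of three polynomial factors and makes the combinatorial verification you would still have to carry out strictly heavier than necessary. Finally, your alternative suggestion to ``square the reduced equation and invoke \eqref{eq:cdfStudentsquared}'' does need the sign analysis you only gesture at, since squaring also validates the spurious sign of $1-2\tau$; the cleaner course is the paper's, which uses \eqref{eq:cdfStudent1} once linearly (to divide out $\tau-\tfrac12$) and \eqref{eq:cdfStudentsquared} once for $\tau(1-\tau)$, so that no sign ambiguity ever arises.
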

\begin{proof}
If the $L_p$-quantile and the quantile coincide then the equation
\begin{equation}\label{eq:peven}
\sum_{k=0}^{p-1}B_{p-1}(k)(-m)^k\left(\tau E[Y^{p-1-k}]+(1-2\tau)G_{p-1-k,Y}(m)\right)=0,
\end{equation}
is satisfied by $m=q_\tau(Y)=\rho_{p,\tau}(Y),$ for any $\tau\in(0,1)$. For $\tau=1/2$ the result is immediate, hence we only consider  $\tau\in(0,1)\backslash\{1/2\}$.


Using Equation~\eqref{eq:truncmom}, the equation to solve becomes
\begin{align}\label{eq:even}
&(2\tau-1)\sum_{k=0}^{p-2}B_{p-1}(k)(-m)^k\left(C_pK_p\sum_{i=0}^{\lfloor{\frac{p-k-2}{2}}\rfloor}m^{p-k-2-2i}p^{i+1}\frac{(p-k-2)!!}{(p-k-2-2i)!!}{\prod_{j=1}^{i+1}\frac{1}{k-1+2j}}\right)\notag\\
&+2\tau(1-\tau)\sum_{k=0}^{p-1}B_{p-1}(k)(-m)^k E[Y^{p-1-k}]=0.
\end{align}
For $k=0,\ldots,p-2$, 
$$
 E[Y^{p-1-k}]=\left\{ 
\begin{array}{cc}
0\qquad&\textrm{if } k \textrm{ is even},\\
\prod_{j=1}^{\frac{p-1-k}{2}}\frac{2j-1}{p-2j}p\qquad&\textrm{if } k \textrm{ is odd},%
\end{array}%
\right. 
$$
thus using the properties of the factorial, Equation~\eqref{eq:even} can be rewritten  as
\begin{align*}
&\left(\tau-\frac{1}{2}\right)\frac{K_p}{2}\sum_{k=0}^{p-2} (-1)^k\left(\sum_{i=0}^{\lfloor{\frac{p-k-2}{2}}\rfloor}m^{p-2-2i}p^{i+\frac{1}{2}}A_{p-1}(k)A_{p-1}(k+1+2i)\right)\notag\\
&=\tau(1-\tau)p^{\frac{p-1}{2}}\sum_{k=1,~k~odd}^{p-1}m^k p^{-\frac{k}{2}} A_{p-1}(k).
\end{align*}
By applying the equalities~\eqref{eq:cdfStudent1} and~\eqref{eq:cdfStudentsquared} 
the equation to solve becomes
\begin{align*}
&{K_p^2}\sum_{k=0}^{p-2} (-1)^k\left(\sum_{i=0}^{\lfloor{\frac{p-k-2}{2}}\rfloor}m^{p-2-2i}p^{-\frac{p-2-2i}{2}}A_{p-1}(k)A_{p-1}(k+1+2i)\right)\notag\\
&=1-K_p^2p^{1-p}\left(\sum_{k=0}^{\frac{p}{2}-1}{m^{p-1-2k}}p^{k}A_{p-1}(2k)\right)^2.
\end{align*}
Inverting the order of the two summations in the left hand side of the above equation, 
 and rearranging the terms, we obtain
\begin{align}\label{eq:withsquaredterm}
&\sum_{i=0}^{\frac{p}{2}-1}m^{-2i}p^{i}\sum_{k=0}^{p-2(i+1)} (-1)^kA_{p-1}(k)A_{p-1}(k+1+2i)+\left(\sum_{k=0}^{\frac{p}{2}-1}{m^{\frac{p}{2}-2k}}p^{-\frac{p}{4}+k}A_{p-1}(2k)\right)^2\\
&=K_p^{-2}m^{2-p}p^{\frac{p}{2}-1}.\notag
\end{align}
For $p=2$, the equation in~\eqref{eq:withsquaredterm} is easily verified, hence we only consider the case $p\geq 4$.
Using the fact that 
\begin{align}\label{squaredterm}
&\left(\sum_{k=0}^{\frac{p}{2}-1}{m^{\frac{p}{2}-2k}}p^{-\frac{p}{4}+k}A_{p-1}(2k)\right)^2=\\
&\sum_{j=0}^{\frac{p}{2}-2}m^{-2j}p^{j}\sum_{\substack{h=1,\\h~\text{odd}}}^{p-2j-3}A(h)A(h+2j+1)+\sum_{j=1}^{\frac{p}{2}}m^{2j}p^{-j}\sum_{\substack{h=0,\\h~\text{even}}}^{{p}-2j}A(h)A(h+2j-1)
\end{align}
(for the proof of this result see the Appendix),  
it follows that Equation~\eqref{eq:withsquaredterm}  becomes
\begin{align}\label{eq:withsquaredterm2}
&\sum_{i=0}^{\frac{p}{2}-1}m^{-2i}p^{i}\sum_{\substack{k=0,\\
k~even}}^{p-2(i+1)} A_{p-1}(k)A_{p-1}(k+1+2i)+\sum_{j=1}^{\frac{p}{2}}m^{2j}p^{-j}\sum_{\substack{h=0,\\h~\text{even}}}^{{p}-2j}A(h)A(h+2j-1)\\
&=K_p^{-2}m^{2-p}p^{\frac{p}{2}-1}.\notag
\end{align}
Using formula (27) in~\cite{GouldQ12}, for which given an even number $l$ we have $A_{p-1}(l)=B_{\frac{p-1}{2}}({l}/{2})$ 
the summations for $k$ and $h$ even can be rewritten respectively as
$$
\sum_{\substack{k=0,\\k~even}}^{p-2(i+1)} A_{p-1}(k)A_{p-1}(k+1+2i)=\sum_{j=0}^{\frac{p}{2}-(i+1)} B_{\frac{p-1}{2}}(j)B_\frac{p-1}{2}\left(\frac{p}{2}-(i+1)-j\right)=B_{p-1}\left(\frac{p}{2}-1-i\right),
$$
and
$$
\sum_{\substack{h=0,\\h~even}}^{p-2j} A_{p-1}(h)A_{p-1}(h+2j-1)=\sum_{h=0}^{\frac{p}{2}-j} B_{\frac{p-1}{2}}(h)B_\frac{p-1}{2}\left(\frac{p}{2}-j-h\right)=B_{p-1}\left(\frac{p}{2}-j\right),
$$
where the last equality is obtained  using the  Chu-Vandermonde binomial identity, see for instance~\cite{Askey75}, pages 59-60. It follows that Equation~\eqref{eq:withsquaredterm} becomes  
\begin{align}\label{eq:withsquaredterm3}
&\sum_{i=0}^{\frac{p}{2}-1}m^{-2i}p^{i}B_{p-1}\left(\frac{p}{2}-1-i\right)+\sum_{j=1}^{\frac{p}{2}}m^{2j}p^{-j}B_{p-1}\left(\frac{p}{2}-1+j\right)\\
&=K_p^{-2}m^{2-p}p^{\frac{p}{2}-1}.\notag
\end{align}
The proof is concluded by  decomposing the term $K_p^{-2}=(1+{m^2}/{p})^{p-1}$ using the binomial expansion
$$
K_p^{-2}p^{\frac{p}{2}-1}m^{2-p}=\sum_{i=0}^{p-1}B_{p-1}(i)m^{-p+2+2i}p^{\frac{p}{2}-1-i}=\sum_{i=-\frac{p}{2}}^{\frac{p}{2}-1}m^{-2i}p^{i}B_{p-1}\left(\frac{p}{2}-1-i\right).\\
$$
\end{proof}
\begin{corollary}Let $X$ be any affine transformation of a random variable with Student $t$ distribution with $p$ degrees of freedom, that is $X\stackrel{d}{=}a+bY$, $a\in\R,~b>0$, then $q_{\tau}(X)=\rho_{p,\tau}(X)$.
\end{corollary}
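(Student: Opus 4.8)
The plan is to reduce the statement to the two preceding theorems by invoking the invariance properties of $L_p$-quantiles and of ordinary quantiles, so essentially no new computation is needed. First I would note that a Student $t$ random variable $Y$ with $p$ degrees of freedom has finite moments of every order strictly less than $p$, in particular a finite moment of order $p-1$, so the Proposition on the properties of $\rho_{p,\tau}$ applies to $Y$ (and hence to $a+bY$, which also has finite moment of order $p-1$).

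Next, since $q_\tau$ and $\rho_{p,\tau}$ are both law-invariant (property~5 of the Proposition for $\rho_{p,\tau}$, and the analogous well-known fact for quantiles), it suffices to prove the identity for the specific representative $X=a+bY$. Then I would apply translation invariance and positive homogeneity in turn: by property~2 and property~3 of the Proposition (using $b>0$),
\begin{equation*}
\rho_{p,\tau}(a+bY)=a+\rho_{p,\tau}(bY)=a+b\,\rho_{p,\tau}(Y),
\end{equation*}
and by the corresponding elementary properties of quantiles,
\begin{equation*}
q_\tau(a+bY)=a+b\,q_\tau(Y).
\end{equation*}

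Finally, Theorem~3.1 (for $p$ odd) and Theorem~3.2 (for $p$ even) give $\rho_{p,\tau}(Y)=q_\tau(Y)$ for every $\tau\in(0,1)$; substituting this into the two displays above yields $\rho_{p,\tau}(a+bY)=q_\tau(a+bY)$, and law-invariance transfers the equality to any $X\stackrel{d}{=}a+bY$. I do not anticipate a genuine obstacle here; the only point that needs a word of care is the integrability condition $E[|Y|^{p-1}]<\infty$ required to legitimately use the Proposition, and the restriction $b>0$ which is exactly what makes positive homogeneity applicable without a sign change (a negative $b$ would instead require combining property~4, the symmetry relation, with a reflection $\tau\mapsto 1-\tau$).
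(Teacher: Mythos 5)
Your proposal is correct and follows essentially the same route as the paper, which likewise deduces the corollary immediately from the translation invariance and positive homogeneity of the $L_p$-quantiles (your extra remarks on law-invariance and the moment condition are fine but not needed in the paper's one-line argument).
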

\begin{proof}The proof follows immediately from the translation invariance and positive homogeneity of the $L_p$-quantiles.
\end{proof}
\appendix
\section*{Appendix}\label{withsquaredterm}
Let us take a closer look at the squared term in Equation~\eqref{eq:withsquaredterm}
\begin{align}
&\left(\sum_{k=0}^{\frac{p}{2}-1}{m^{\frac{p}{2}-2k}}p^{-\frac{p}{4}+k}A_{p-1}(2k)\right)^2=\notag\\
&\left(\sum_{k=0}^{\frac{p}{2}-1}{m^{{p}-4k}}p^{-\frac{p}{2}+2k}A_{p-1}(2k)^2\right)+2\sum_{k=0}^{\frac{p}{2}-2}\sum_{j=k+1}^{\frac{p}{2}-1}{m^{\frac{p}{2}-2(k+j)}}p^{-\frac{p}{2}+k+j}A_{p-1}(2k)A_{p-1}(2j)=\label{changeindex}\\
&m^pp^{-\frac{p}{2}}+m^{-p+4}p^{\frac{p}{2}-2}A(1)^2+\sum_{i=1}^{p-3}m^{p-2i}p^{-\frac{p}{2}+i}\left(A(i)^2\mathbb{I}_{\{i~\text{even}\}}+\sum_{\substack{h=\max\{0,2i+2-p\},\\h~\text{even}}}^{i-1}2A(h)A(2i-h)\right)=\notag\\
&m^pp^{-\frac{p}{2}}+m^{-p+4}p^{\frac{p}{2}-2}A(1)^2+\sum_{j=1-\frac{p}{2}}^{\frac{p}{2}-3}m^{-2j}p^{j}\left(A(j+\frac{p}{2})^2\mathbb{I}_{\{j+\frac{p}{2}~\text{even}\}}+\sum_{\substack{h=\max\{0,2j+2\},\\h~\text{even}}}^{\frac{p}{2}+j-1}2A(h)A(2j+p-h)\right)=\notag\\
&m^{-p+4}p^{\frac{p}{2}-2}A(1)^2+\sum_{j=0}^{\frac{p}{2}-3}m^{-2j}p^{j}\left(A(j+\frac{p}{2})^2\mathbb{I}_{\{j+\frac{p}{2}~\text{even}\}}+\sum_{\substack{k=0,\\k~\text{even}}}^{\frac{p}{2}-j-3}2A(k+2j+2)A(p-k-2)\right)\\
&+m^pp^{-\frac{p}{2}}+\sum_{j=1}^{\frac{p}{2}-1}m^{2j}p^{-j}\left(A(\frac{p}{2}-j)^2\mathbb{I}_{\{\frac{p}{2}-j~\text{even}\}}+\sum_{\substack{h=0,\\h~\text{even}}}^{\frac{p}{2}-j-1}2A(h)A(-2j+p-h)\right)=\notag\\
&m^{-p+4}p^{\frac{p}{2}-2}A(1)^2+\sum_{j=0}^{\frac{p}{2}-3}m^{-2j}p^{j}\left(A(j+\frac{p}{2})^2\mathbb{I}_{\{j+\frac{p}{2}~\text{even}\}}+\sum_{\substack{h=1,\\h~\text{odd}}}^{\frac{p}{2}-j-2}2A(h)A(h+2j+1)\right)\\
&+m^pp^{-\frac{p}{2}}+\sum_{j=1}^{\frac{p}{2}-1}m^{2j}p^{-j}\left(A(\frac{p}{2}-j)^2\mathbb{I}_{\{\frac{p}{2}-j~\text{even}\}}+\sum_{\substack{h=0,\\h~\text{even}}}^{\frac{p}{2}-j-1}2A(h)A(h+2j-1)\right)=\notag\\
&\sum_{j=0}^{\frac{p}{2}-2}m^{-2j}p^{j}\sum_{\substack{h=1,\\h~\text{odd}}}^{p-2j-3}A(h)A(h+2j+1)+\sum_{j=1}^{\frac{p}{2}}m^{2j}p^{-j}\sum_{\substack{h=0,\\h~\text{even}}}^{{p}-2j}A(h)A(h+2j-1)\notag.
\end{align}
\bibliographystyle{chicago}

\begin{thebibliography}{}

\bibitem[\protect\citeauthoryear{Abdous and R{{\'e}}millard}{Abdous and
  R{{\'e}}millard}{1995}]{Remillard95}
Abdous, B. and B.~R{{\'e}}millard (1995).
\newblock Relating quantiles and expectiles under weighted-symmetry.
\newblock {\em Ann. Inst. Statist. Math.\/}~{\em 47\/}(2), 371--384.

\bibitem[\protect\citeauthoryear{Artzner, Delbaen, Eber, and Heath}{Artzner
  et~al.}{1999}]{Artzner99}
Artzner, P., F.~Delbaen, J.-M. Eber, and D.~Heath (1999).
\newblock Coherent measures of risk.
\newblock {\em Math. Finance\/}~{\em 9\/}(3), 203--228.

\bibitem[\protect\citeauthoryear{Askey}{Askey}{1975}]{Askey75}
Askey, R. (1975).
\newblock {\em Orthogonal polynomials and special functions}.
\newblock Society for Industrial and Applied Mathematics, Philadelphia, Pa.

\bibitem[\protect\citeauthoryear{{Basel Committee on Banking
  Supervision}}{{Basel Committee on Banking Supervision}}{2013}]{BCBS13}
{Basel Committee on Banking Supervision} (2013).
\newblock Consultative {D}ocument {O}ctober 2013. {F}undamental review of the
  trading book: {A} revised market risk framework.

\bibitem[\protect\citeauthoryear{Bellini and Bignozzi}{Bellini and
  Bignozzi}{2015}]{BelliniB15}
Bellini, F. and V.~Bignozzi (2015).
\newblock On elicitable risk measures.
\newblock {\em Quant. Finance\/}~{\em 15\/}(5), 725--733.

\bibitem[\protect\citeauthoryear{Bellini and Di~Bernardino}{Bellini and
  Di~Bernardino}{2015}]{BelliniDB15}
Bellini, F. and E.~Di~Bernardino (2015).
\newblock Risk management with expectiles.
\newblock To appear in \emph{The European Journal of Finance}.

\bibitem[\protect\citeauthoryear{Bellini, Klar, M{{\"u}}ller, and
  Rosazza~Gianin}{Bellini et~al.}{2014}]{Bellinietal14}
Bellini, F., B.~Klar, A.~M{{\"u}}ller, and E.~Rosazza~Gianin (2014).
\newblock Generalized quantiles as risk measures.
\newblock {\em Insurance Math. Econom.\/}~{\em 54}, 41--48.

\bibitem[\protect\citeauthoryear{Bernardi, Bignozzi, and Petrella}{Bernardi
  et~al.}{2016}]{BernardiBP16}
Bernardi, M., V.~Bignozzi, and L.~Petrella (2016).
\newblock Bayesian inference for ${L}_p$ quantile regression models.
\newblock Working papers, Sapienza University of Rome.

\bibitem[\protect\citeauthoryear{Bernardi, Bottone, and Petrella}{Bernardi
  et~al.}{2016}]{Bernardietal16}
Bernardi, M., M.~Bottone, and L.~Petrella (2016).
\newblock Bayesian robust quantile regression.
\newblock arXiv:1605.05602.

\bibitem[\protect\citeauthoryear{Bernardi, Gayraud, and Petrella}{Bernardi
  et~al.}{2015}]{Bernardietal15}
Bernardi, M., G.~Gayraud, and L.~Petrella (2015).
\newblock Bayesian tail risk interdependence using quantile regression.
\newblock {\em Bayesian Anal.\/}~{\em 10\/}(3), 553--603.

\bibitem[\protect\citeauthoryear{Breckling and Chambers}{Breckling and
  Chambers}{1988}]{BrecklingChambers88}
Breckling, J. and R.~Chambers (1988).
\newblock {$M$}-quantiles.
\newblock {\em Biometrika\/}~{\em 75\/}(4), 761--771.

\bibitem[\protect\citeauthoryear{Chambers and Tzavidis}{Chambers and
  Tzavidis}{2006}]{ChambersT06}
Chambers, R. and N.~Tzavidis (2006).
\newblock {$M$}-quantile models for small area estimation.
\newblock {\em Biometrika\/}~{\em 93\/}(2), 255--268.

\bibitem[\protect\citeauthoryear{Chen}{Chen}{1996}]{Chen96}
Chen, Z. (1996).
\newblock Conditional ${L}_p$-quantiles and their application to the testing of
  symmetry in non-parametric regression.
\newblock {\em Statist. Probab. Lett.\/}~{\em 29\/}(2), 107--115.

\bibitem[\protect\citeauthoryear{De~Rossi and Harvey}{De~Rossi and
  Harvey}{2009}]{Derossi09}
De~Rossi, G. and A.~Harvey (2009).
\newblock Quantiles, expectiles and splines.
\newblock {\em J. Econometrics\/}~{\em 152\/}(2), 179--185.

\bibitem[\protect\citeauthoryear{Deprez and Gerber}{Deprez and
  Gerber}{1985}]{DeprezG85}
Deprez, O. and H.~U. Gerber (1985).
\newblock On convex principles of premium calculation.
\newblock {\em Insurance Math. Econom.\/}~{\em 4\/}(3), 179--189.

\bibitem[\protect\citeauthoryear{Efron}{Efron}{1991}]{Efron91}
Efron, B. (1991).
\newblock Regression percentiles using asymmetric squared error loss.
\newblock {\em Statist. Sinica\/}~{\em 1\/}(1), 93--125.

\bibitem[\protect\citeauthoryear{Embrechts, Puccetti, R{\"u}schendorf, Wang,
  and Beleraj}{Embrechts et~al.}{2014}]{Embrechts14}
Embrechts, P., G.~Puccetti, L.~R{\"u}schendorf, R.~Wang, and A.~Beleraj (2014).
\newblock An academic response to {B}asel 3.5.
\newblock {\em Risks\/}~{\em 2\/}(1), 25--48.

\bibitem[\protect\citeauthoryear{Emmer, Kratz, and Tasche}{Emmer
  et~al.}{2015}]{Emmer16}
Emmer, S., M.~Kratz, and D.~Tasche (2015).
\newblock What is the best risk measure in practice? a comparison of standard
  measures.
\newblock {\em Journal of Risk\/}~{\em 18\/}(2), 31--60.

\bibitem[\protect\citeauthoryear{F\"{o}llmer and Schied}{F\"{o}llmer and
  Schied}{2002}]{FollmerS2002}
F\"{o}llmer, H. and A.~Schied (2002).
\newblock Convex measures of risk and trading constraints.
\newblock {\em Finance Stoch.\/}~{\em 6\/}(4), 429--47.

\bibitem[\protect\citeauthoryear{Gneiting}{Gneiting}{2011}]{Gneiting11}
Gneiting, T. (2011).
\newblock Making and evaluating point forecast.
\newblock {\em J. Amer. Statis. Assoc.\/}~{\em 106\/}(494), 746--762.

\bibitem[\protect\citeauthoryear{Gould and Quaintance}{Gould and
  Quaintance}{2012}]{GouldQ12}
Gould, H. and J.~Quaintance (2012).
\newblock Double fun with double factorials.
\newblock {\em Math. Mag.\/}~{\em 85\/}(3), 177--192.

\bibitem[\protect\citeauthoryear{Gradshteyn and Ryzhik}{Gradshteyn and
  Ryzhik}{2007}]{TableofInt07}
Gradshteyn, I.~S. and I.~M. Ryzhik (2007).
\newblock {\em Table of integrals, series, and products\/} (Seventh ed.).
\newblock Elsevier/Academic Press, Amsterdam.
\newblock Translated from the Russian, Translation edited and with a preface by
  Alan Jeffrey and Daniel Zwillinger.

\bibitem[\protect\citeauthoryear{Guo, Zhou, Huang, and H{\"a}rdle}{Guo
  et~al.}{2015}]{Guoetal13}
Guo, M., L.~Zhou, J.~Z. Huang, and W.~K. H{\"a}rdle (2015).
\newblock Functional data analysis of generalized regression quantiles.
\newblock {\em Statist. Comput.\/}~{\em 25\/}(2), 189--202.

\bibitem[\protect\citeauthoryear{Heinrich}{Heinrich}{2013}]{Heinrich13}
Heinrich, C. (2013).
\newblock The mode functional is not elicitable.
\newblock {\em Biometrika\/}~{\em 101\/}(1), 245--251.

\bibitem[\protect\citeauthoryear{IAIS}{IAIS}{2014}]{IAIS14}
IAIS (2014).
\newblock Consultation document december 2014. risk-based global insurance
  capital standard.

\bibitem[\protect\citeauthoryear{Jakobsons and Vanduffel}{Jakobsons and
  Vanduffel}{2015}]{JakobsonsV15}
Jakobsons, E. and S.~Vanduffel (2015).
\newblock Dependence uncertainty bounds for the expectile of a portfolio.
\newblock {\em Risks\/}~{\em 3\/}(4), 599--623.

\bibitem[\protect\citeauthoryear{Jones}{Jones}{1994}]{Jones94}
Jones, M.~C. (1994).
\newblock Expectiles and m-quantiles are quantiles.
\newblock {\em Statist. Probab. Lett.\/}~{\em 20\/}(2), 149--153.

\bibitem[\protect\citeauthoryear{Jorion}{Jorion}{2007}]{Jorion07}
Jorion, P. (2007).
\newblock {\em Value at risk: the new benchmark for managing financial risk},
  Volume~3.
\newblock McGraw-Hill New York.

\bibitem[\protect\citeauthoryear{Koenker}{Koenker}{1992}]{Koenker92}
Koenker, R. (1992).
\newblock When are expectiles percentiles?
\newblock {\em Econometric Theory\/}~{\em 8}, 423--424.

\bibitem[\protect\citeauthoryear{Koenker}{Koenker}{1993}]{Koenker93}
Koenker, R. (1993).
\newblock When are expectiles percentiles?
\newblock {\em Econometric Theory\/}~{\em 9\/}(3), 526--527.

\bibitem[\protect\citeauthoryear{Koenker}{Koenker}{2005}]{Koenker05}
Koenker, R. (2005).
\newblock {\em Quantile regression}, Volume~38 of {\em Econometric Society
  Monographs}.
\newblock Cambridge University Press, Cambridge.

\bibitem[\protect\citeauthoryear{Koenker and Bassett}{Koenker and
  Bassett}{1978}]{KoenkerBassett78}
Koenker, R. and G.~Bassett, Jr. (1978).
\newblock Regression quantiles.
\newblock {\em Econometrica\/}~{\em 46\/}(1), 33--50.

\bibitem[\protect\citeauthoryear{Kottas and Krnjaji{{\'c}}}{Kottas and
  Krnjaji{{\'c}}}{2009}]{KottasK09}
Kottas, A. and M.~Krnjaji{{\'c}} (2009).
\newblock Bayesian semiparametric modelling in quantile regression.
\newblock {\em Scand. J. Stat.\/}~{\em 36\/}(2), 297--319.

\bibitem[\protect\citeauthoryear{Lambert, Pennock, and Shoham}{Lambert
  et~al.}{2008}]{Lambertetal08}
Lambert, N.~S., D.~M. Pennock, and Y.~Shoham (2008).
\newblock Eliciting properties of probability distributions.
\newblock In {\em Proceedings of the 9th ACM Conference on Electronic
  Commerce}, pp.\  129--138. ACM.

\bibitem[\protect\citeauthoryear{Newey and Powell}{Newey and
  Powell}{1987}]{NeweyP87}
Newey, W.~K. and J.~L. Powell (1987).
\newblock Asymmetric least squares estimation and testing.
\newblock {\em Econometrica\/}~{\em 55\/}(4), 819--847.

\bibitem[\protect\citeauthoryear{Pratesi, Ranalli, and Salvati}{Pratesi
  et~al.}{2009}]{Pratesi09}
Pratesi, M., M.~G. Ranalli, and N.~Salvati (2009).
\newblock Nonparametric m-quantile regression using penalised splines.
\newblock {\em J. Nonparametr. Stat.\/}~{\em 21\/}(3), 287--304.

\bibitem[\protect\citeauthoryear{Savage}{Savage}{1971}]{Savage71}
Savage, L.~J. (1971).
\newblock Elicitation of personal probabilities and expectations.
\newblock {\em J. Amer. Statist. Assoc.\/}~{\em 66}, 783--801.

\bibitem[\protect\citeauthoryear{Schnabel and Eilers}{Schnabel and
  Eilers}{2009}]{Schnabel09}
Schnabel, S.~K. and P.~H. Eilers (2009).
\newblock Optimal expectile smoothing.
\newblock {\em Comput. Statist. Data Anal.\/}~{\em 53\/}(12), 4168--4177.

\bibitem[\protect\citeauthoryear{Shaw}{Shaw}{2005}]{Shaw05}
Shaw, W.~T. (2005).
\newblock New methods for simulating the student t-distribution-direct use of
  the inverse cumulative distribution.
\newblock {\em J. Comput. Finance\/}~{\em 9}, 37--73.

\bibitem[\protect\citeauthoryear{Sobotka, Kauermann, Waltrup, and
  Kneib}{Sobotka et~al.}{2013}]{Sobotka13}
Sobotka, F., G.~Kauermann, L.~S. Waltrup, and T.~Kneib (2013).
\newblock On confidence intervals for semiparametric expectile regression.
\newblock {\em Stat. Comput.\/}~{\em 23\/}(2), 135--148.

\bibitem[\protect\citeauthoryear{Sobotka and Kneib}{Sobotka and
  Kneib}{2012}]{Sobotka12}
Sobotka, F. and T.~Kneib (2012).
\newblock Geoadditive expectile regression.
\newblock {\em Comput. Statist. Data Anal.\/}~{\em 56\/}(4), 755--767.

\bibitem[\protect\citeauthoryear{Taddy and Kottas}{Taddy and
  Kottas}{2010}]{TaddyK10}
Taddy, M.~A. and A.~Kottas (2010).
\newblock A {B}ayesian nonparametric approach to inference for quantile
  regression.
\newblock {\em J. Bus. Econom. Statist.\/}~{\em 28\/}(3), 357--369.

\bibitem[\protect\citeauthoryear{Taylor}{Taylor}{2008}]{Taylor08}
Taylor, J.~W. (2008).
\newblock Estimating value at risk and expected shortfall using expectiles.
\newblock {\em Journal of Financial Econometrics\/}~{\em 6\/}(2), 231--252.

\bibitem[\protect\citeauthoryear{Yu and Moyeed}{Yu and Moyeed}{2001}]{YuM01}
Yu, K. and R.~A. Moyeed (2001).
\newblock Bayesian quantile regression.
\newblock {\em Statist. Probab. Lett.\/}~{\em 54\/}(4), 437--447.

\bibitem[\protect\citeauthoryear{Ziegel}{Ziegel}{2014}]{Ziegel14}
Ziegel, J. (2014).
\newblock Coherency and elicitability.
\newblock To appear in \emph{Math. Finance}.

\bibitem[\protect\citeauthoryear{Zou}{Zou}{2014}]{Zou14}
Zou, H. (2014).
\newblock Generalizing {K}oenker's distribution.
\newblock {\em J. Statist. Plann. Inference\/}~{\em 148}, 123--127.

\end{thebibliography}

 \end{document}